\documentclass[preprint,12pt]{elsarticle}

\usepackage[T1]{fontenc}
\usepackage[utf8]{inputenc}

\usepackage{amsmath,amssymb,amsthm}

\usepackage{booktabs}

\usepackage{multirow}
\usepackage{tabularx}
\usepackage{colortbl}
\usepackage{array}

\makeatletter
\providecommand{\insert@pcolumn}{\insert@column}
\makeatother

\usepackage{graphicx}
\usepackage{tikz}
\usepackage{pgfplots}
\usepackage{pgfplotstable}
\pgfplotsset{compat=1.18}
\usetikzlibrary{
  shapes.geometric,
  shapes.misc,
  arrows.meta,
  positioning,
  fit,
  calc,
  trees,
  backgrounds,
  decorations.pathreplacing,
  calligraphy,
  patterns
}

\usepackage{algorithm}
\usepackage{algpseudocode}

\usepackage{enumitem}
\usepackage{natbib}
\usepackage{xcolor}
\usepackage{hyperref}
\usepackage{cleveref}

\biboptions{authoryear,round}

\definecolor{cPrimary}{RGB}{0,91,127}       
\definecolor{cSecondary}{RGB}{0,150,199}     
\definecolor{cAccent}{RGB}{232,119,34}       
\definecolor{cSuccess}{RGB}{46,139,87}       
\definecolor{cDanger}{RGB}{192,57,43}        
\definecolor{cNeutral}{RGB}{108,117,125}     
\definecolor{cBgLight}{RGB}{240,248,255}     
\definecolor{cBgWarm}{RGB}{255,248,240}      
\definecolor{cBgGreen}{RGB}{240,255,245}     
\definecolor{cHighlight}{RGB}{255,250,230}   

\newtheorem{theorem}{Theorem}
\newtheorem{lemma}[theorem]{Lemma}

\hypersetup{
  colorlinks = true,
  linkcolor  = cPrimary,
  citecolor  = cPrimary,
  urlcolor   = cSecondary
}

\journal{Computers \& Security}

\begin{document}

\begin{frontmatter}

\title{Lightweight Tamper-Evident Log Integrity Verification for IoT Edge Environments:\\
       A Merkle-Tree Pipeline with Adaptive Chunking}

\author[inst1,inst3]{Muhammet An\i l Ya\u{g}\i z\corref{cor1}}
\ead{anil@singularityrd.com}
\cortext[cor1]{Corresponding author}

\author[inst1]{Fahrettin Horasan}

\author[inst2]{Ahmet Ha\c{s}im Yurttakal}

\address[inst1]{Department of Computer Engineering, K\i r\i kkale University, K\i r\i kkale, Turkey}
\address[inst2]{Faculty of Engineering, Afyon Kocatepe University, Afyonkarahisar, Turkey}
\address[inst3]{Singularity Research and Development}

\begin{abstract}
Integrity of audit logs produced by Internet-of-Things (IoT) devices is a prerequisite for post-incident forensics, regulatory compliance, and operational accountability.
Blockchain-backed logging infrastructures can satisfy this requirement but impose consensus overhead, network dependencies, and deployment complexity that can be prohibitive at the IoT edge.
This paper presents an evaluated, lightweight integrity-verification pipeline that combines Merkle-tree commitments with resource-aware chunking to deliver tamper evidence without any distributed-ledger dependency.
The pipeline operates in three stages: (i)~resource-aware batch ingestion via adaptive chunk sizing, (ii)~Merkle-tree construction with $\mathcal{O}(\log n)$ inclusion-proof generation, and (iii)~deterministic single-entry verification against a trusted root anchor.
We report an implementation audit that corrected two evaluation defects: a double-counting bug in tampering metrics and a redundant full-tree rebuild on every batch append.
On the corrected codebase, five-run benchmarks on synthetic IoT logs on a single workstation exceed $130000\,\mathrm{logs/s}$ at $100000$ records, with per-entry verification around $22\,\mathrm{ms}$, proof generation around $22\,\mathrm{ms}$, a $1006\,\mathrm{byte}$ average proof size, and peak memory below $5\,\mathrm{MB}$.
Tampering detection attains precision, recall, and F1 of 1.0 across corruption ratios from 1\,\% to 50\,\%.
\end{abstract}

\begin{keyword}
IoT security \sep tamper-evident logging \sep Merkle tree \sep integrity verification \sep digital forensics \sep edge computing
\end{keyword}

\end{frontmatter}

\section{Introduction}
\label{sec:intro}

The number of Internet-of-Things (IoT) devices deployed worldwide continues to grow rapidly, increasing the volume and operational importance of machine-generated audit logs \citep{hassan2019survey,nist2020iot,ali2018blockchain,lin2017survey}.
Each device generates continuous streams of operational, diagnostic, and security-relevant log data \citep{li2017iot,alaba2017internet}.
When these logs serve as evidence in forensic investigations, compliance audits, or incident-response workflows, their integrity is not merely desirable; it is a hard prerequisite \citep{kebande2017cloud,nist80061r2,nist80086,iso27037}.
Operational guidance on secure log generation, collection, storage, and retention is provided by NIST log-management recommendations \citep{nist80092}.
A single undetected modification can invalidate an entire chain of evidence and compromise legal admissibility \citep{nist80086,iso27037,stoyanova2020survey}.

Three broad families of tamper-evident logging have emerged over the past two decades.
\emph{Key-evolution schemes} \citep{schneier1999secure} chain successive log entries through evolving cryptographic keys, providing forward integrity but requiring careful key management and offering only sequential verification.
\emph{Forward-secure aggregate authentication} \citep{ma2009new,yavuz2012efficient} reduces storage through aggregate signatures at the cost of per-entry signing overhead.
\emph{Blockchain-backed audit logs} \citep{putz2019secure,ahmad2019blockaudit} externalize trust to a distributed ledger, achieving strong non-repudiation but introducing consensus latency, multi-node network dependencies, and deployment complexity that conflict with the resource constraints typical of IoT edge devices, including sub-gigahertz processors, limited RAM, and intermittent connectivity \citep{hassan2019survey,nist2020iot}.

A fourth approach, rooted in authenticated data structures, constructs Merkle trees \citep{merkle1987digital} over log entries and stores only the root hash in a trusted anchor.
Merkle-tree-based time-stamping and append-only commitments have a long lineage in the cryptographic literature \citep{haber1991timestamp,bayer1993timestamp}.
\citet{crosby2009efficient} demonstrated the efficiency of this construction for tamper-evident logging, and the IETF Certificate Transparency framework \citep{rfc9162} operationalized it at Internet scale.
To the best of our knowledge, we are not aware of prior work that \emph{jointly} combines Merkle-tree-based tamper evidence with explicit resource-aware ingestion mechanisms targeted at IoT edge constraints, together with an audited, reproducible implementation and multi-run benchmarks.

This paper addresses that gap.
We present a complete, audited pipeline for tamper-evident IoT log verification and make four contributions:

\begin{enumerate}[leftmargin=*, label=\textbf{C\arabic*.}]
  \item \textbf{Audited implementation.}
    A rigorous code audit that identified and corrected two latent evaluation defects: a tampering-metric double-count that yielded impossible accuracy values ($>$1.0) and a redundant full-tree rebuild per batch that inflated runtime by a constant factor.

  \item \textbf{Adaptive Merkle pipeline.}
    A three-stage architecture (adaptive ingestion, Merkle construction, single-entry verification) that sustains $>\!130{,}000$~logs/s throughput and $<\!5$\,MB peak memory at $100000$ records, with $\mathcal{O}(\log n)$ proof size.

  \item \textbf{Security analysis (assumption-explicit).}
    Proof sketches under standard hash-function assumptions (collision and second-preimage resistance) and a secure trusted anchor, together with an analysis of replay, truncation, and injection attacks.

  \item \textbf{Systematic comparative positioning.}
    A multi-dimensional qualitative and complexity-theoretic comparison against representative systems from the key-evolution, signature-based, authenticated-structure, and blockchain families, clarifying trust assumptions and deployment trade-offs for edge settings.
\end{enumerate}

The remainder of this paper is organized as follows.
\Cref{sec:related} reviews related work and positions the contribution.
\Cref{sec:design} details the system architecture, threat model, and adaptive chunking mechanism.
\Cref{sec:security} presents formal security analysis.
\Cref{sec:evaluation} reports experimental results.
\Cref{sec:discussion} provides comparative discussion.
\Cref{sec:limits} acknowledges limitations and outlines future directions.
\Cref{sec:conclusion} concludes.

\section{Related Work and Positioning}
\label{sec:related}

\subsection{Key-Evolution and Forward-Integrity Schemes}

\citet{schneier1999secure} introduced the concept of \emph{forward-secure audit logs}, in which a sequence of cryptographic keys $K_0, K_1, \ldots$ is derived through a one-way function, and each log entry $L_j$ is authenticated with a MAC computed under $K_j$.
Once $K_j$ is erased after use, an adversary who later compromises the host cannot forge entries dated before the compromise.
This construction provides strong forward integrity but imposes two practical limitations: (i)~verification of entry $L_j$ requires sequential re-derivation of all keys $K_0$ through $K_j$, resulting in $\mathcal{O}(n)$ verification cost; and (ii)~key-lifecycle management on constrained IoT devices introduces operational complexity \citep{nist2020iot}.

\citet{ma2009new} extended this line with forward-secure sequential aggregate authentication (FssAgg), which compresses a chain of MACs or signatures into a single aggregate authenticator.
\citet{yavuz2012efficient} further improved resilience against compromise by separating signing and verification keys.
Both schemes improve storage efficiency but retain sequential verification and require per-entry cryptographic signing operations whose cost is typically higher than simple hashing in software implementations.

\subsection{Authenticated Data Structures}

\citet{crosby2009efficient} proposed a tamper-evident history tree based on Merkle trees \citep{merkle1987digital}, enabling $\mathcal{O}(\log n)$ membership and consistency proofs.
\citet{hartung2016secure} extended the model to support verifiable excerpts that allow selective disclosure without exposing the full log.
\citet{pulls2015balloon} introduced Balloon, an append-only authenticated structure with forward security.
The IETF Certificate Transparency standard \citep{rfc9162} operationalized Merkle-tree logging at Internet scale for TLS certificate auditing.

These works validate the suitability of Merkle trees for tamper-evident logging.
However, none target resource-constrained IoT edge devices or incorporate adaptive resource management for memory-bound ingestion.

\subsection{Blockchain-Backed Audit Logging}

\citet{putz2019secure} built a secure logging infrastructure atop Hyperledger Fabric, leveraging permissioned-blockchain consensus for immutable audit trails published in \textit{Computers~\&~Security}.
\citet{ahmad2019blockaudit} proposed BlockAudit, using Practical Byzantine Fault Tolerance (PBFT) replication for transparent audit logs.
Both systems provide strong non-repudiation guarantees, but at substantial cost: blockchain consensus requires multi-node network participation, introduces latency proportional to block finalization time, and demands infrastructure that is impractical for single-device edge deployments.

\subsection{Hardware-Rooted Trust}

\citet{sinha2014continuous} leveraged Trusted Platform Module (TPM~2.0) hardware to provide continuous tamper-proof logging.
While hardware-rooted trust offers compelling guarantees, it requires specialized hardware not universally available on commodity IoT devices.

\subsection{Positioning of This Work}

\Cref{tab:lit-compare} provides a structured comparison across the seven systems discussed above plus our contribution.
The comparison spans six dimensions that directly affect edge deployability: trust model, consensus requirements, verification complexity, random-access capability, per-entry cryptographic cost, and suitability for resource-constrained devices.

Based on the representative systems we review, our pipeline is among the few approaches that simultaneously offer: (i)~no external consensus or network dependency, (ii)~$\mathcal{O}(\log n)$ random-access verification, (iii)~hash-only per-entry cost (no signatures), and (iv)~an explicit resource-aware ingestion policy for constrained environments.

\begin{table}[t]
\centering
\caption{Multi-dimensional comparison with representative secure-logging systems.  ``Verification'' refers to asymptotic cost of verifying a single entry.  ``Random access'' indicates whether an arbitrary entry can be verified without processing preceding entries.  $H$~denotes a hash evaluation; $\mathsf{Sign}$/$\mathsf{Verify}$ denote signature operations.}
\label{tab:lit-compare}
\small
\renewcommand{\arraystretch}{1.35}
\setlength{\tabcolsep}{4pt}
\resizebox{1.15\textwidth}{!}{%
\begin{tabular}{@{} l p{4.5cm} c c c p{4.5cm} @{}}
\toprule
\textbf{System} &
\textbf{Trust model} &
\textbf{Consensus} &
\textbf{Verification} &
\textbf{Random} &
\textbf{Per-entry cost} \\
\midrule
Schneier \& Kelsey \citeyearpar{schneier1999secure}
  & Key evolution + MAC chain
  & No
  & $\mathcal{O}(n)$
  & No
  & 1 MAC + key derivation \\
Ma \& Tsudik \citeyearpar{ma2009new}
  & Forward-secure aggregate signatures
  & No
  & $\mathcal{O}(n)$
  & No
  & 1 $\mathsf{Sign}$ \\
Yavuz et al. \citeyearpar{yavuz2012efficient}
  & Separated signing/verification keys
  & No
  & $\mathcal{O}(n)$
  & No
  & 1 $\mathsf{Sign}$ \\
Crosby \& Wallach \citeyearpar{crosby2009efficient}
  & Auditor-challenged history tree
  & No
  & $\mathcal{O}(\log n)$
  & Yes
  & 1 $H$ \\
Putz et al. \citeyearpar{putz2019secure}
  & Permissioned blockchain
  & Yes
  & $\mathcal{O}(1)^{\dagger}$
  & Yes
  & Consensus round \\
Ahmad et al. \citeyearpar{ahmad2019blockaudit}
  & PBFT replication
  & Yes
  & $\mathcal{O}(1)^{\dagger}$
  & Yes
  & Consensus round \\
\rowcolor{cHighlight}
\textbf{This work}
  & \textbf{Merkle tree + trusted root anchor}
  & \textbf{No}
  & $\boldsymbol{\mathcal{O}(\log n)}$
  & \textbf{Yes}
  & \textbf{1} $\boldsymbol{H}$ \\
\multicolumn{6}{p{16cm}}{\footnotesize $^{\dagger}$\,Verification against immutable ledger is $\mathcal{O}(1)$ lookup; however, achieving immutability requires prior consensus at $\mathcal{O}(N_{\text{peers}})$ cost.} \\
\bottomrule
\end{tabular}%
}
\end{table}

\section{System Design}
\label{sec:design}

\subsection{Threat Model}
\label{sec:threat}

We consider an adversary $\mathcal{A}$ with the following capabilities:

\begin{enumerate}[leftmargin=*, label=(\roman*)]
  \item \textbf{Post-collection modification.}
    $\mathcal{A}$ may alter the content of any stored log entry $L_i$ after it has been ingested and committed to the Merkle tree.

  \item \textbf{Deletion.}
    $\mathcal{A}$ may remove one or more entries from storage, reducing the log sequence length.

  \item \textbf{Injection.}
    $\mathcal{A}$ may insert fabricated entries into the stored log stream.
\end{enumerate}

We assume that the Merkle root hash $R$ is stored in a \emph{trusted anchor} that $\mathcal{A}$ cannot modify.
Practical instantiations of this anchor include a hardware security module (HSM) \citep{fips1403}, a Trusted Platform Module (TPM) \citep{sinha2014continuous}, or a remote append-only store.
We do not assume confidentiality of the log contents; the security goal is \emph{integrity verification} and \emph{tamper evidence}, not secrecy.

\subsection{Architecture Overview}

\Cref{fig:architecture} illustrates the three-stage pipeline.
Stage~1 (\emph{Adaptive Ingestion}) accepts raw log entries from IoT devices and partitions them into variable-size chunks whose boundaries are governed by a resource-aware sizing policy.
Stage~2 (\emph{Merkle Construction}) builds a binary hash tree over the ingested log stream and stores the resulting root hash in the trusted anchor.
Stage~3 (\emph{Verification}) generates $\mathcal{O}(\log n)$ inclusion proofs on demand and verifies individual entries against the anchored root.

\begin{figure}[t]
\centering
\resizebox{\textwidth}{!}{%
\begin{tikzpicture}[
  font=\small,
  >=Stealth,
  node distance=0.6cm and 1.2cm,
  stage/.style={
    draw=cPrimary, fill=cBgLight, rounded corners=5pt,
    minimum width=2.4cm, minimum height=1.0cm,
    line width=0.9pt, text=cPrimary, font=\small\bfseries,
    align=center
  },
  comp/.style={
    draw=cSecondary, fill=white, rounded corners=3pt,
    minimum width=2.2cm, minimum height=0.8cm,
    line width=0.6pt, font=\footnotesize, align=center
  },
  iot/.style={
    draw=cAccent, fill=cBgWarm, rounded corners=3pt,
    minimum width=1.6cm, minimum height=0.7cm,
    line width=0.6pt, font=\footnotesize, align=center
  },
  tanchor/.style={
    draw=cDanger!80, fill=cDanger!6, rounded corners=3pt,
    minimum width=2.2cm, minimum height=0.8cm,
    line width=0.8pt, font=\footnotesize\bfseries, align=center,
    text=cDanger!90
  },
  arrow/.style={->, thick, draw=cNeutral!70},
  darrow/.style={->, thick, draw=cDanger!60, dashed},
  phasearrow/.style={->, very thick, draw=cPrimary!60},
  lbl/.style={font=\scriptsize\itshape, text=cNeutral}
]

\node[iot] (d1) {Sensor $s_1$};
\node[iot, below=0.25cm of d1] (d2) {Sensor $s_2$};
\node[iot, below=0.25cm of d2] (d3) {Sensor $s_k$};
\node[lbl, below=-0.03cm of d2] (dots1) {$\vdots$};

\node[stage, right=0.8cm of d2] (s1) {Stage 1\\Adaptive\\Ingestion};
\node[comp, below=0.4cm of s1] (chunk) {Chunk Sizer};

\node[stage, right=1.0cm of s1] (s2) {Stage 2\\Merkle\\Construction};
\node[comp, below=0.4cm of s2] (build) {Tree Builder};

\node[tanchor, above=0.8cm of s2] (root) {Trust Anchor\\$R = \mathrm{root}(\mathcal{T})$};

\node[stage, right=1.0cm of s2] (s3) {Stage 3\\Verification\\Engine};
\node[comp, below=0.4cm of s3] (proof) {Proof Gen.};

\node[comp, right=0.8cm of s3, draw=cSuccess, text=cSuccess, fill=cBgGreen, font=\footnotesize\bfseries] (out) {Verdict:\\Valid / Tampered};

\draw[arrow] (d1.east) -- (s1.west |- d1.east);
\draw[arrow] (d2.east) -- (s1.west);
\draw[arrow] (d3.east) -- (s1.west |- d3.east);

\draw[phasearrow] (s1.east) -- node[above, lbl] {chunks} (s2.west);
\draw[phasearrow] (s2.east) -- node[above, lbl] {proofs} (s3.west);
\draw[arrow] (s3.east) -- (out.west);

\draw[arrow, draw=cDanger!60] (s2.north) -- node[right, lbl, text=cDanger!70] {anchor} (root.south);
\draw[darrow] (root.east) -| node[above right, lbl, text=cDanger!70, pos=0.25] {compare} (s3.north);

\draw[arrow, draw=cSecondary!50] (chunk.north) -- (s1.south);
\draw[arrow, draw=cSecondary!50] (build.north) -- (s2.south);
\draw[arrow, draw=cSecondary!50] (proof.north) -- (s3.south);

\begin{scope}[on background layer]
  \node[draw=cPrimary!15, fill=cPrimary!3, rounded corners=8pt,
        fit=(s1)(chunk), inner sep=6pt] {};
  \node[draw=cPrimary!15, fill=cPrimary!3, rounded corners=8pt,
        fit=(s2)(build), inner sep=6pt] {};
  \node[draw=cPrimary!15, fill=cPrimary!3, rounded corners=8pt,
        fit=(s3)(proof), inner sep=6pt] {};
\end{scope}

\end{tikzpicture}
}%
\caption{Three-stage pipeline architecture.
Sensors stream log entries into the adaptive ingestion stage, which partitions them into resource-aware chunks.
The Merkle construction stage builds a binary hash tree and anchors its root $R$ in a trusted store.
The verification engine generates $\mathcal{O}(\log n)$ inclusion proofs and compares recomputed roots against $R$.}
\label{fig:architecture}
\end{figure}

\subsection{Merkle-Tree Construction and Proof Generation}
\label{sec:merkle}

Given a log sequence $L_0, L_1, \ldots, L_{n-1}$, the pipeline computes leaf hashes $h_i = H(L_i)$ for a collision-resistant hash function $H$ (SHA-256 or BLAKE2b; see \Cref{sec:hash_compare}) and commits to the stream using a Merkle hash tree \citep{merkle1987digital}.
Internal tree nodes are computed as $h_{\text{parent}} = H(h_{\text{left}} \| h_{\text{right}})$, where $\|$ denotes concatenation.
When the number of leaves at any level is odd, the unpaired node is promoted to the next level without hashing, preserving structural consistency.
This is one valid Merkle-tree variant; other implementations instead duplicate the final node.
All proof generation and verification in our prototype follow the promotion rule specified here, making the commitment and proofs unambiguous.
The root $R$ of the resulting binary tree commits to the entire log sequence.

An \emph{inclusion proof} for entry $L_i$ consists of the sequence of sibling hashes along the path from leaf $h_i$ to the root.
This sequence has length at most $\lceil \log_2 n \rceil$, yielding $\mathcal{O}(\log n)$ proof size and verification cost.
Verification recomputes a candidate root $R'$ from $h_i$ and the proof path, then accepts if and only if $R' = R$.
\Cref{fig:merkle_proof} illustrates this construction.

\begin{figure}[t]
\centering
\resizebox{\textwidth}{!}{%
\begin{tikzpicture}[
  font=\footnotesize,
  >=Stealth,
  level distance=1.4cm,
  sibling distance=1.1cm,
  tnode/.style={
    draw=cPrimary!60, fill=cBgLight, rounded corners=2pt,
    minimum width=1.15cm, minimum height=0.6cm,
    line width=0.6pt, font=\footnotesize
  },
  hnode/.style={
    draw=cAccent!80, fill=cAccent!12, rounded corners=2pt,
    minimum width=1.15cm, minimum height=0.6cm,
    line width=0.9pt, font=\footnotesize\bfseries
  },
  pnode/.style={
    draw=cSuccess!70, fill=cSuccess!10, rounded corners=2pt,
    minimum width=1.15cm, minimum height=0.6cm,
    line width=0.8pt, font=\footnotesize
  },
  leaf/.style={
    draw=cNeutral!50, fill=white, rounded corners=2pt,
    minimum width=1.15cm, minimum height=0.6cm,
    line width=0.5pt, font=\footnotesize
  },
  edge from parent/.style={draw=cNeutral!40, thick},
  lbl/.style={font=\tiny\itshape, text=cNeutral}
]

\node[hnode] (root) {$R$}
  child {
    node[hnode] (h01) {$h_{0{-}3}$}
    child {
      node[hnode] (h0) {$h_{0{-}1}$}
      child { node[hnode] (l0) {$h_0$} edge from parent[draw=cAccent!60, very thick] }
      child { node[pnode] (l1) {$h_1$} edge from parent[draw=cSuccess!50, thick, dashed] }
      edge from parent[draw=cAccent!60, very thick]
    }
    child {
      node[pnode] (h1) {$h_{2{-}3}$}
      child { node[leaf] (l2) {$h_2$} }
      child { node[leaf] (l3) {$h_3$} }
      edge from parent[draw=cSuccess!50, thick, dashed]
    }
    edge from parent[draw=cAccent!60, very thick]
  }
  child {
    node[pnode] (h23) {$h_{4{-}7}$}
    child {
      node[leaf] (h2) {$h_{4{-}5}$}
      child { node[leaf] (l4) {$h_4$} }
      child { node[leaf] (l5) {$h_5$} }
    }
    child {
      node[leaf] (h3) {$h_{6{-}7}$}
      child { node[leaf] (l6) {$h_6$} }
      child { node[leaf] (l7) {$h_7$} }
    }
    edge from parent[draw=cSuccess!50, thick, dashed]
  };

\node[below=0.15cm of l0, font=\scriptsize\bfseries, text=cAccent] {$L_0$};
\node[below=0.1cm of l0, yshift=-0.3cm, font=\tiny, text=cAccent] {(target)};

\node[anchor=north west, font=\scriptsize] at (5.0, 0.2) {
  \tikz{\node[hnode, minimum width=0.6cm, minimum height=0.35cm] {};} Verification path
};
\node[anchor=north west, font=\scriptsize] at (5.0, -0.4) {
  \tikz{\node[pnode, minimum width=0.6cm, minimum height=0.35cm] {};} Proof (sibling hashes)
};
\node[anchor=north west, font=\scriptsize] at (5.0, -1.0) {
  \tikz{\node[leaf, minimum width=0.6cm, minimum height=0.35cm] {};} Remaining nodes
};

\end{tikzpicture}
}%
\caption{Merkle-tree inclusion proof for entry $L_0$.
The verification path (orange) traces from the target leaf $h_0$ to the root~$R$.
The proof consists of the three sibling hashes (green): $h_1$, $h_{2\text{-}3}$, and $h_{4\text{-}7}$.
A verifier recomputes $R' = H(H(H(h_0 \| h_1) \| h_{2\text{-}3}) \| h_{4\text{-}7})$ and accepts if $R' = R$.
Proof size is $\lceil \log_2 n \rceil$ sibling hashes.
In our $n = 10{,}000$ experiments, interior indices yield 14-hash proofs with an average serialized size of $1006\,\mathrm{bytes}$ (\Cref{tab:proofs}).}
\label{fig:merkle_proof}
\end{figure}

\subsection{Adaptive Chunking Mechanism}
\label{sec:chunking}

IoT edge devices operate under tight memory budgets.
Ingesting a large log batch monolithically can exceed available memory, while processing entries one-by-one sacrifices throughput due to per-call overhead.
The adaptive chunking mechanism mediates this trade-off by dynamically adjusting the chunk size $C$ based on runtime resource observations.

Let $M_{\text{avail}}$ denote the available system memory, $M_{\text{target}} \in (0,1)$ the target utilization ratio, $C_{\min}$ and $C_{\max}$ the permitted chunk-size bounds, and $K$ a device-specific scaling constant.
The chunk size is recomputed before each batch as:
\begin{equation}
  C_{\text{new}} = \mathrm{clamp}\!\Bigl(
    \Bigl\lfloor \frac{M_{\text{avail}} \cdot M_{\text{target}}}{K} \Bigr\rfloor,\;
    C_{\min},\; C_{\max}
  \Bigr).
  \label{eq:chunk}
\end{equation}

An adjustment factor $A$ further modulates $C_{\text{new}}$ based on memory pressure $P = 1 - M_{\text{avail}} / M_{\text{total}}$:
\begin{equation}
  A =
  \begin{cases}
    0.8 & \text{if } P > 0.8, \\
    0.9 & \text{if } P > 0.6, \\
    1.1 & \text{if } P < 0.3, \\
    1.0 & \text{otherwise}.
  \end{cases}
  \label{eq:adjust}
\end{equation}

This policy ensures that ingestion throughput degrades gracefully under memory pressure rather than failing abruptly, a property that is critical for long-running IoT logging on devices with no swap space.

\subsection{Implementation and Correctness Audit}
\label{sec:audit}

The implementation comprises four Python modules: \texttt{merkle\_tree.py} (tree construction and proof generation), \texttt{adaptive\_chunking.py} (resource-aware batch sizing), \texttt{integrity\_verifier.py} (verification engine), and \texttt{log\_generator.py} (synthetic workload generation with configurable IoT log patterns).

During a systematic correctness audit, we identified two defects with material impact on evaluation fidelity:

\begin{enumerate}[leftmargin=*, label=\textbf{D\arabic*.}]
  \item \textbf{Tampering-metric double-count.}
    The original detection logic counted each tampered index twice when computing true positives, producing reported accuracies exceeding 1.0.
    We corrected this by replacing the counting-based tallying with set-based computation of true positives, false positives, and false negatives: $\mathrm{TP} = |S_{\text{detected}} \cap S_{\text{tampered}}|$, $\mathrm{FP} = |S_{\text{detected}} \setminus S_{\text{tampered}}|$, $\mathrm{FN} = |S_{\text{tampered}} \setminus S_{\text{detected}}|$.

  \item \textbf{Redundant full-tree rebuild.}
    The pre-audit code rebuilt the Merkle tree \emph{after every inserted leaf} within a batch, effectively incurring $\mathcal{O}(b n)$ work for a batch of~$b$ new leaves when $n$ leaves had already been accumulated.
    We corrected this by rebuilding the tree \emph{once per batch} after all new leaves are appended, yielding $\mathcal{O}(n)$ work per batch for the current ``rebuild-after-batch'' design.
\end{enumerate}

Post-fix validation confirms that a 10\,\% tampering ratio produces exactly 10\,\% detected entries, not the previously inflated 20\,\%, and that ingestion time scales linearly in the number of entries rather than quadratically.

\begin{algorithm}[t]
\caption{Adaptive Merkle-Tree Ingestion and Verification}
\label{alg:pipeline}
\begin{algorithmic}[1]
\Require Log stream $\mathcal{L} = \{L_0, L_1, \ldots\}$, hash function $H$, anchor store $\mathcal{R}$
\Ensure Verified integrity verdicts

\Statex \textcolor{cPrimary}{\textbf{// Stage 1: Adaptive Ingestion}}
\While{$\mathcal{L}$ has pending entries}
  \State $C \gets \mathrm{AdaptiveChunkSize}(M_{\text{avail}}, P)$ \Comment{\Cref{eq:chunk,eq:adjust}}
  \State $B \gets \mathrm{ReadChunk}(\mathcal{L}, C)$
\Statex \textcolor{cPrimary}{\textbf{// Stage 2: Merkle Construction (rebuild once per batch)}}
  \For{each $L_i \in B$}
    \State $h_i \gets H(L_i)$
    \State Append $h_i$ to leaf array
  \EndFor
  \State Rebuild tree (once); $R \gets \mathrm{root}(\mathcal{T})$
  \State $\mathcal{R}.\mathrm{store}(R)$ \Comment{Anchor the root}
\EndWhile

\Statex
\Statex \textcolor{cPrimary}{\textbf{// Stage 3: On-demand Verification}}
\Function{Verify}{index $i$, entry $L_i$}
  \State $h_i \gets H(L_i)$
  \State $\pi_i \gets \mathrm{MerkleProof}(\mathcal{T}, i)$ \Comment{$|\pi_i| = \mathcal{O}(\log n)$}
  \State $R' \gets \mathrm{RecomputeRoot}(h_i, \pi_i)$
  \State \Return $R' = \mathcal{R}.\mathrm{load}()$
\EndFunction
\end{algorithmic}
\end{algorithm}

\section{Security Analysis}
\label{sec:security}

We summarize the security properties of the pipeline under a standard threat model and then analyze four concrete attack vectors.
Our claims assume (i)~a secure trusted anchor that preserves the integrity of the committed root, and (ii)~a cryptographic hash function $H$ with collision and second-preimage resistance.

\subsection{Security Properties (Proof Sketches)}

\begin{theorem}[Collision Resistance]
\label{thm:collision}
Let $H : \{0,1\}^* \to \{0,1\}^n$ be a collision-resistant hash function.
For any probabilistic polynomial-time adversary $\mathcal{A}$,
\[
  \Pr\bigl[\mathcal{A} \text{ outputs } (L_1, L_2) \text{ with }
    L_1 \neq L_2 \;\wedge\; H(L_1) = H(L_2)\bigr]
  \;\le\; \mathrm{negl}(n).
\]
\end{theorem}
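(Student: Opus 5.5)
The statement is, on its face, a restatement of the definition of collision resistance for $H$. The plan is therefore to make the definitional content explicit rather than to derive anything new. I would first fix the standard formalisation. $H$ is really a keyed family $\{H_s\}$ indexed by a security parameter $\lambda$, with the output length $n$ a polynomial in $\lambda$; for a single fixed unkeyed function such as SHA-256, collisions exist and a nonuniform adversary can hard-code one, so this formalisation is needed for the statement to be meaningful. The event in the theorem is exactly the experiment $\mathsf{Coll}_{\mathcal{A},H}(\lambda)$: sample the key, run $\mathcal{A}$, and check that its output is a pair $L_1\neq L_2$ with $H(L_1)=H(L_2)$. With this definition in place, the inequality holds immediately, since the hypothesis is precisely that $\Pr[\mathsf{Coll}_{\mathcal{A},H}(\lambda)=1]\le\mathrm{negl}(\lambda)$ for every PPT $\mathcal{A}$.

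The substantive content the authors presumably want is the consequence for the Merkle commitment of \Cref{sec:merkle}. I would add that consequence as the real argument, via a reduction. Suppose an adversary produces either (a) a leaf value $L'\neq L_i$, or (b) a proof path $\pi'$ for index $i$, such that recomputation yields the anchored root $R$. Build $\mathcal{B}$ that runs this adversary and then walks the two root-to-leaf recomputations, the honest one and the forged one, upward from the leaf. Since they end at the same value $R$, there is a lowest level at which the hash outputs agree while the hash inputs differ. That pair of inputs is either two leaf preimages or two strings $h_{\text{left}}\|h_{\text{right}}$, and in either case it is a collision for $H$. Hence $\mathcal{B}$ succeeds with at least the forger's probability, which is negligible. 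The proof length is at most $\lceil\log_2 n\rceil$, so $\mathcal{B}$ runs in polynomial time.

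The main obstacle is the \emph{promotion rule} for unpaired nodes, together with the lack of domain separation between leaf and internal hashes. Because a promoted node is not rehashed, a forged path could shift levels, and an internal value $h_{\text{left}}\|h_{\text{right}}$ could be presented as a leaf. For the reduction I would fix the tree shape from the anchored $n$ and the index $i$, so that the positions of hash applications are determined and the verifier accepts only paths consistent with that shape. The level-by-level comparison then aligns, and a mismatch still yields a genuine collision. If the shape were not fixed, I would note that second-preimage-style attacks of the kind addressed by RFC~9162's $0x00$/$0x01$ prefixes become relevant, and that the theorem as stated covers only the case where the tree shape is pinned by the anchored length $n$.
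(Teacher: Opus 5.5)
Your proposal is correct and takes the same route as the paper, whose proof likewise just observes that the inequality is the definition of collision resistance for $H$, adding only an informal $2^{n/2}$ birthday-bound remark for SHA-256 and BLAKE2b. Your keyed-family formalisation is a genuine sharpening the paper omits, since the asymptotic statement is not meaningful for a single fixed $256$-bit function; the Merkle-path reduction you append is really the content of Theorems~\ref{thm:integrity} and~\ref{thm:detection} rather than of this statement.
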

\begin{proof}
This follows directly from the definition of collision resistance for $H$.
Both SHA-256 \citep{dang2015nist} and BLAKE2b \citep{aumasson2013blake2} are standardized hash functions with $n = 256$; generically, finding a collision is believed to require about $2^{n/2}$ work (birthday bound).
\end{proof}

\begin{theorem}[Integrity under Modification]
\label{thm:integrity}
Assuming the trusted root $R$ is unmodified and $H$ is collision-resistant, any alteration of a stored log entry $L_i$ to $L'_i \neq L_i$ is detected during verification except with negligible probability.
\end{theorem}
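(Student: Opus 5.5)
The plan is a reduction to the collision resistance of $H$ (\Cref{thm:collision}). Suppose an adversary $\mathcal{A}$ replaces $L_i$ by $L'_i \neq L_i$ and the verification of $L'_i$ still accepts. Here verification means that the verifier recomputes a candidate root $R'$ from $h'_i = H(L'_i)$ and some proof path $\pi'$, and checks $R' = R$. We build an algorithm $\mathcal{B}$ that outputs a collision for $H$ with the same probability. $\mathcal{B}$ runs $\mathcal{A}$ and knows the honest leaves $L_0,\ldots,L_{n-1}$ and the honest tree $\mathcal{T}$ whose root is the anchored value $R$; this root is unchanged by the trusted-anchor assumption. When $\mathcal{A}$ produces an accepting pair $(L'_i,\pi')$, $\mathcal{B}$ compares the honest recomputation path with the forged one.

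The key step is a backward walk along the path. Write the honest path as $v_0 = h_i, v_1, \ldots, v_d = R$, where each $v_{k+1}$ is either $H(v_k \| s_k)$ or $H(s_k \| v_k)$ with honest sibling $s_k$, or equals $v_k$ when the node is promoted unpaired. Write the forged path as $v'_0 = H(L'_i), v'_1, \ldots, v'_d = R$. The two paths have the same shape. This holds because the index $i$ and the leaf count $n$, which is fixed by the anchored commitment, determine both the left/right positions and the promotion steps under the rule of \Cref{sec:merkle}. Since $v'_d = v_d$, let $k$ be the largest index with $v'_k \neq v_k$.

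There are two cases. If $k \ge 0$ is an internal level, then $v_{k+1} = v'_{k+1}$, and $k+1$ cannot be a promotion step, since promotion would force $v'_k = v_k$. Hence the two hash inputs $v_k \| s_k$ and $v'_k \| s'_k$, taken in the correct order, are distinct strings with the same output. They are distinct because they differ in the fixed-length half holding $v_k$ versus $v'_k$. This is a collision for $H$. If instead $v'_k = v_k$ for every $k \ge 1$ but $v'_0 \ne v_0$, the same argument applies at level $0$. If $v'_0 = v_0$, then $H(L'_i) = H(L_i)$ with $L'_i \neq L_i$, which is directly a collision. In every case $\mathcal{B}$ finds a collision in time polynomial in the path length, which is at most $\lceil \log_2 n\rceil$. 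Therefore the detection failure probability is at most the bound of \Cref{thm:collision}, i.e.\ $\mathrm{negl}(n)$.

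The main obstacle is making ``same shape'' rigorous. The paper does not domain-separate leaves from internal nodes, and promotion leaves some nodes unhashed. Without care, a forger could present $L'_i = h_{\text{left}} \| h_{\text{right}}$ as a leaf at a different depth, which is a second-preimage-style structural attack, not a hash collision. I would address this first by fixing the verifier's tree shape from $(i,n)$, with $n$ bound to the anchored root or stored with it. The verifier then uses exactly the honest depth and sides. If that is unavailable, I would state the theorem only for verification at the honest position and note that leaf/node prefixing, as in \citep{rfc9162}, removes the issue.
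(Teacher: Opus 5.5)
Your proposal is correct and follows the same route as the paper's proof. If $H(L'_i)=H(L_i)$ there is a collision at the leaf; otherwise there is a collision on the recomputation path, which the paper only asserts (``changing $h_i$ to $h'_i$ changes the recomputed root except with negligible probability'') and which you actually prove with the backward walk to the highest differing level. Your version is also more careful than the paper's sketch in handling promotion steps, left/right ordering, adversarially supplied siblings, and the need to fix the tree shape from $(i,n)$ when leaves and internal nodes are not domain-separated, a caveat the paper's proof passes over.
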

\begin{proof}
Let $h_i = H(L_i)$ and $h'_i = H(L'_i)$ with $L'_i \neq L_i$.
If $h'_i = h_i$ for $L'_i \neq L_i$, this constitutes a collision/second-preimage against $H$, which is assumed infeasible except with negligible probability.
The Merkle tree root is computed through a sequence of hash compositions along the path from leaf~$i$ to the root:
\[
  R = H\bigl(\ldots H(H(h_i \| h_{\sigma(i,1)}) \| h_{\sigma(i,2)}) \ldots \| h_{\sigma(i,d)}\bigr),
\]
where $h_{\sigma(i,j)}$ denotes the $j$-th sibling hash along the path and $d = \lceil \log_2 n \rceil$.
Under collision resistance for $H$ on its relevant input domain, changing $h_i$ to $h'_i \neq h_i$ changes the recomputed root $R'$ except with negligible probability.
Verification compares $R'$ with the trusted $R$ and rejects.
\end{proof}

\begin{theorem}[Tamper-Detection Probability]
\label{thm:detection}
Assuming a secure trusted anchor and second-preimage resistance of $H$, an adversary who modifies an entry $L_i$ without detection succeeds with at most negligible probability.
For SHA-256 ($n = 256$), a brute-force second-preimage attempt has success probability about $2^{-256}$ per trial.
\end{theorem}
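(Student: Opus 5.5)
The plan is a reduction: any adversary that modifies an entry without detection yields either a second preimage or a collision for $H$. Fix the committed sequence $L_0,\dots,L_{n-1}$, the anchored root $R$, and an index $i$. Suppose $\mathcal{A}$ outputs $L'_i \neq L_i$ together with a proof $\pi'$ (sibling hashes, possibly forged) such that verification accepts, i.e.\ $\mathrm{RecomputeRoot}(H(L'_i),\pi') = R$. I would build a reduction $\mathcal{B}$ that runs $\mathcal{A}$ and extracts a break of $H$ from this event.

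The first step is leaf level. If the proof is honest ($\pi' = \pi_i$) and $H(L'_i) = H(L_i)$, then $L'_i$ is a second preimage of $L_i$. The target $L_i$ is fixed by the challenger before $\mathcal{A}$ acts, so this contradicts second-preimage resistance directly.

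The second step handles $H(L'_i) \neq H(L_i)$, or a forged $\pi'$. Walk both root computations from the top down along the path; they agree at $R$. At the first level $j$ where they diverge, the two parent inputs $h^{(j)} \| s_j$ and $h'^{(j)} \| s'_j$ differ but hash to the same value. That pair is a collision for $H$.

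The third step is a careful level-by-level comparison, and here the promotion rule of \Cref{sec:merkle} matters. At a level where the node is promoted without hashing, no hash is computed, so the divergence argument simply skips to the next hashed level. Because leaves are $H(L)$ and internal nodes are $H(a\|b)$ with no domain separation, a forged path of different length could make a leaf image equal an internal node. I would note that verification fixes the path shape from $i$ and $n$, so lengths agree. Alternatively, I would state the caveat explicitly, as Certificate Transparency handles this with prefixes.

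The main obstacle is reconciling the statement's hypothesis, which names second-preimage resistance only, with the forged-proof case, which naturally gives a collision. This gap can only be closed if the sibling hashes are honestly computed from the stored tree $\mathcal{T}$, as in Algorithm~\ref{alg:pipeline}. In that case every colliding pair produced has one side fixed in advance by the honest tree. That makes the collision a second preimage relative to an honest node value, and the reduction stays within second-preimage resistance.

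For the quantitative part, a generic attacker making $q$ trials against a fixed $256$-bit target succeeds with probability at most $q\cdot 2^{-256}$ (per trial $2^{-256}$). Combining the cases gives success at most $\mathrm{Adv}^{\mathrm{spr}}_H + (\text{levels})\cdot\mathrm{Adv}^{\mathrm{spr}}_H$. This is $O(\log n)\cdot\mathrm{negl}$, which is still negligible. The bound also invokes \Cref{thm:integrity} for the honest-proof case.
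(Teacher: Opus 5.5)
Your route is the paper's: its proof says only that an undetected modification yields a second preimage at the modified leaf or a collision along the recomputation path. For the latter it leans on the section's standing collision-resistance assumption. Your top-down divergence walk is the natural way to make that precise. Your remark on path shape is a genuine addition. If length and orientation were read off the proof rather than fixed by $(i,n)$, the lack of domain separation would let an adversary present an internal-node input as a log entry with a truncated proof, and pass verification without breaking $H$ at all.

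The step that is wrong is the ``main obstacle'' paragraph. A forged $\pi'$ does not push you into collision resistance, and honest siblings are not needed. Your walk compares the adversary's recomputation against the \emph{honest} path values $h^{(j)}, s_j$ from $H(L_i)$ up to $R$. These are fixed by the original log at commitment time, independently of whatever $\pi'$ is supplied later. So at the first level from the top where $(h'^{(j)},s'_j)\neq(h^{(j)},s_j)$, one side of the colliding pair is the pre-fixed honest input $h^{(j)}\|s_j$, and the other side is a second preimage of it. If no such level exists, you reach $H(L'_i)=H(L_i)$ with $L'_i\neq L_i$. Hence your $\mathrm{Adv}^{\mathrm{spr}}_H+(\text{levels})\cdot\mathrm{Adv}^{\mathrm{spr}}_H$ bound also holds for forged proofs, given the shape condition you already impose.

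Restricting to honest siblings would in fact weaken the result without justification. The threat model protects only $R$, and nothing stops an adversary who rewrites stored entries from rewriting the stored tree $\mathcal{T}$ too. What really separates collision resistance from second-preimage resistance here is whether the adversary can influence $L_i$ \emph{before} it is committed (e.g.\ via a compromised sensor), not whether $\pi'$ is forged.

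Two smaller points. Citing \Cref{thm:integrity} at the end re-imports collision resistance, which your first step already makes unnecessary. And ``second preimage of an honest node value'' is a fixed-target, multi-target notion, since the targets are structured strings rather than uniform challenges. That is adequate at the paper's heuristic level, but it is not literally the standard SPR game.
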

\begin{proof}
Undetected modification requires finding $L'_i \neq L_i$ such that the recomputed root $R'$ equals the trusted root $R$.
This can be reduced to violating the assumed properties of $H$ (e.g., a second-preimage at the modified leaf or a collision along the recomputation path).
Under standard assumptions, such events occur with negligible probability for any polynomial-time adversary.
\end{proof}

\begin{lemma}[Security Preservation under Adaptive Chunking]
\label{lem:chunking}
Adaptive chunking does not weaken any of the security properties established in Theorems~\ref{thm:collision}--\ref{thm:detection}.
\end{lemma}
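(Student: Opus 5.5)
The plan is to show that adaptive chunking affects only \emph{when} the tree is rebuilt and the root re-anchored. It never changes \emph{what} is committed. The key observation is a determinism/invariance claim: the final Merkle root $R$ depends only on the ordered leaf sequence $h_0,\ldots,h_{n-1}$, and not on how that sequence was split into batches. I will first establish this claim, and then argue that each of Theorems~\ref{thm:collision}--\ref{thm:detection} is stated purely in terms of $H$, the leaf sequence, and the anchored root, so it transfers unchanged.

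First I formalize Stage~1 and Stage~2 from Algorithm~\ref{alg:pipeline}. The chunk-size function of \Cref{eq:chunk,eq:adjust} produces a sequence of sizes $C_1,C_2,\ldots$ with each $C_j\in[C_{\min},C_{\max}]$ and $C_{\min}\ge 1$. These sizes determine a partition of the stream into consecutive batches $B_1,B_2,\ldots$. Within each batch, leaves are hashed as $h_i=H(L_i)$ and appended in order to a single leaf array. The leaf array after $k$ batches is therefore exactly $(H(L_0),\ldots,H(L_{m_k-1}))$ with $m_k=\sum_{j\le k}|B_j|$, which is independent of the values $C_j$. Because the post-audit code rebuilds the whole tree from the full leaf array (defect D2), the root is a deterministic function of that array under the fixed promotion rule of \Cref{sec:merkle}. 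So $R$ equals the root one would obtain by building the tree over all $n$ entries in a single pass. I would argue this by a short induction on the number of batches: the rebuild after batch $k$ ignores the earlier tree and recomputes from scratch, so no batch-boundary information enters $\mathcal{T}$.

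Second, I transfer the theorems. Theorem~\ref{thm:collision} concerns $H$ alone and is untouched. For Theorems~\ref{thm:integrity} and~\ref{thm:detection}, the verification procedure $\textsc{Verify}(i,L_i)$ uses only $H$, the proof path $\pi_i$ derived from $\mathcal{T}$, and $\mathcal{R}.\mathrm{load}()$. By the invariance claim, $\mathcal{T}$ and $\pi_i$ are identical to those in the non-chunked setting. Any undetected modification under chunking is therefore an undetected modification of the monolithic tree, and the earlier reductions to collisions or second preimages of $H$ apply verbatim.

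The main obstacle is the adversary's possible influence on the chunking inputs $M_{\text{avail}}$ and $P$, for example by inducing memory pressure. I will handle this by noting that these inputs affect only the sizes $C_j$, which the invariance claim shows are irrelevant to $R$. An adversary who controls chunking can at most change the timing and number of intermediate anchored roots, or cause resource denial of service, and integrity is not at stake in either case. I would state explicitly one assumption: the anchor checked at verification time is the root over the full committed prefix, and verification is performed against the latest stored root. Intermediate roots for shorter prefixes are honest commitments to those prefixes, so verification against them remains sound for indices within the prefix.
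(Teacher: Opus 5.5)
Your proposal is correct and takes essentially the same route as the paper: the root is a deterministic function of the leaf sequence alone, so batch boundaries are irrelevant and Theorems~\ref{thm:collision}--\ref{thm:detection} transfer verbatim. Your version is more precise than the paper's in two ways. You say the root depends on the \emph{ordered} sequence but not on its grouping, justified by the rebuild-from-scratch after each batch, whereas the paper loosely claims invariance to ``order and grouping.'' You also add sound remarks on adversarially influenced memory-pressure inputs and on intermediate anchored roots, which the paper omits.
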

\begin{proof}
Chunking determines only the batch granularity at which leaf hashes are appended to the tree.
Each entry $L_i$ is hashed independently as $h_i = H(L_i)$ regardless of chunk boundaries, and the tree construction algorithm is invariant to the order and grouping of leaf insertions.
Consequently, the final tree structure and root $R$ are identical for any chunking strategy, and all three theorems apply without modification.
\end{proof}

\subsection{Attack-Surface Analysis}

\paragraph{Modification attacks.}
By \Cref{thm:integrity}, altering any stored entry $L_i$ changes its leaf hash, which propagates to the root with overwhelming probability.
Verification against the trusted root detects the modification.

\paragraph{Deletion (truncation) attacks.}
Removing entries from the log changes the tree structure and therefore the root hash.
By anchoring historical roots at known entry counts, a verifier can detect that the current tree contains fewer leaves than expected.

\paragraph{Injection attacks.}
Inserting a fabricated entry $L'$ produces a new leaf $h' = H(L')$ that was not present in the original tree.
The resulting root $R'$ differs from the anchored $R$, and any subsequent verification of legitimate entries under the inflated tree fails against $R$.

\paragraph{Replay attacks.}
Replaying an old, valid entry $L_j$ at position $i \neq j$ produces a tree in which position $i$ contains $H(L_j)$ instead of $H(L_i)$.
Since $L_j \neq L_i$ (entries include monotonic timestamps), the root changes and verification detects the replay.

\section{Experimental Evaluation}
\label{sec:evaluation}

\subsection{Setup and Methodology}

All experiments execute on a single workstation (Intel Core~i7, 32\,GB RAM) using the corrected Python implementation described in \Cref{sec:audit}.
Synthetic IoT logs are generated with a fixed random seed to ensure reproducibility.
The log schema reflects common IoT telemetry and event fields used in operational security monitoring \citep{hassan2019survey,nist2020iot,nist80092,rfc5424}.
Ingestion benchmarks report the mean and standard deviation across five independent runs per configuration.
The datasets span five scales: $1000$, $5000$, $10000$, $50000$, and $100000$ entries.
All benchmark artifacts (JSON results, CSV exports, runner scripts) are included with the submission source; the reproducibility entry point is \texttt{python main.py benchmark}.

\subsection{Ingestion Throughput}
\label{sec:ingestion}

\Cref{tab:ingestion} reports ingestion throughput and wall-clock time for both adaptive-chunking (64\,KB initial chunk size) and fixed-chunking (16\,KB) modes.
At $100000$~entries, adaptive chunking sustains $134242\,\mathrm{logs/s}$ ($\sigma = 11056$) and fixed chunking sustains $138850\,\mathrm{logs/s}$ ($\sigma = 11756$).
Both modes converge to similar throughput at large scale; in this evaluation, the observable benefit of adaptive mode is reduced sensitivity to per-call overhead and improved throughput stability under repeated runs.

\Cref{fig:throughput} plots throughput as a function of log count, revealing that both modes exhibit a ramp-up phase at small $n$ (dominated by per-call overhead) before stabilizing above $130000\,\mathrm{logs/s}$ for $n \ge 10{,}000$.

\begin{table}[t]
\centering
\caption{Ingestion performance across five dataset sizes.  Throughput and wall-clock time are reported as mean $\pm$ standard deviation over five runs.  Tree depth is $\lceil \log_2 n \rceil + 1$.}
\label{tab:ingestion}
\small
\renewcommand{\arraystretch}{1.25}
\begin{tabular}{
  r
  r @{${}\pm{}$} r
  r @{${}\pm{}$} r
  r
}
\toprule
{\textbf{Log count}} &
\multicolumn{2}{c}{\textbf{Adaptive (logs/s)}} &
\multicolumn{2}{c}{\textbf{Fixed (logs/s)}} &
{\textbf{Depth}} \\
\midrule
  1000 &  87762 &  3209 & 125869 & 32973 & 11 \\
  5000 & 142711 & 12341 & 158905 & 20823 & 14 \\
 10000 & 145767 & 23975 & 134851 & 18558 & 15 \\
 50000 & 148176 & 19404 & 146376 &  7154 & 17 \\
100000 & 134242 & 11056 & 138850 & 11756 & 18 \\
\bottomrule
\end{tabular}
\end{table}

\begin{figure}[t]
\centering
\begin{tikzpicture}
\begin{axis}[
    width=0.9\textwidth,
    height=6cm,
    xlabel={Number of log entries},
    ylabel={Throughput (logs/s)},
    xmode=log,
    log basis x=10,
    xmin=800, xmax=150000,
    ymin=60000, ymax=200000,
    grid=major,
    grid style={dashed, gray!25},
    thick,
    legend style={
      at={(0.98,0.02)},
      anchor=south east,
      font=\footnotesize,
      draw=gray!40,
      fill=white,
      fill opacity=0.9,
      rounded corners=2pt
    },
    mark size=2.5pt,
    every axis label/.style={font=\small},
    every tick label/.style={font=\footnotesize},
    ylabel style={yshift=-2pt},
    xlabel style={yshift=2pt}
]
\addplot+[
  mark=square*,
  color=cPrimary,
  thick,
  error bars/.cd,
  y dir=both, y explicit,
] table [
  col sep=comma,
  x=log_count,
  y=logs_per_second,
  y error=logs_per_second_std
] {data/ingestion_adaptive.csv};
\addlegendentry{Adaptive (64\,KB)}

\addplot+[
  mark=triangle*,
  color=cAccent,
  thick,
  error bars/.cd,
  y dir=both, y explicit,
] table [
  col sep=comma,
  x=log_count,
  y=logs_per_second,
  y error=logs_per_second_std
] {data/ingestion_fixed.csv};
\addlegendentry{Fixed (16\,KB)}

\end{axis}
\end{tikzpicture}
\caption{Ingestion throughput vs.\ dataset size with $\pm 1\sigma$ error bars (5~runs).
Both chunking modes converge above $130000\,\mathrm{logs/s}$ at scale.
The higher variance of fixed chunking at small $n$ reflects sensitivity to per-call overhead without adaptive back-off.}
\label{fig:throughput}
\end{figure}

\subsection{Controlled Stress Test (Simulated Memory Pressure)}
\label{sec:stress}

Steady-state throughput does not expose how an adaptive policy reacts to adverse conditions.
To obtain a controlled, reproducible stress test on a single machine, we override the chunker's memory-pressure signal to follow a deterministic profile: baseline pressure 0.25, a three-window stress period at 0.85, then recovery to baseline.
We ingest $20000$ logs in windows of $2000$ entries and record the resulting chunk size and batch count.
\Cref{fig:controlled_stress} shows that the chunk size is reduced during the stress phase and partially recovers once the baseline signal is restored, while the number of batches increases as expected.

\begin{figure}[t]
\centering
\begin{tikzpicture}
\begin{axis}[
    width=0.9\textwidth,
    height=6cm,
    xlabel={Window index},
    ylabel={Chunk size (KB)},
    xmin=-0.2, xmax=4.2,
    ymin=0, ymax=70,
    grid=major,
    grid style={dashed, gray!25},
    thick,
    mark size=2.5pt,
    every axis label/.style={font=\small},
    every tick label/.style={font=\footnotesize},
]
\addplot+[
  mark=circle*,
  color=cPrimary,
  thick,
] table [
  col sep=comma,
  x=window,
  y=chunk_size_kb
] {data/controlled_stress.csv};
\end{axis}
\end{tikzpicture}
\caption{Controlled stress test with simulated memory pressure. Chunk size decreases during the stress phase (pressure 0.85) and partially recovers once the baseline signal (pressure 0.25) is restored.}
\label{fig:controlled_stress}
\end{figure}

\subsection{Verification and Proof Generation}
\label{sec:verif_results}

Single-entry verification on a $10000$-entry tree averages $21.75\,\mathrm{ms}$ ($\sigma = 0.61\,\mathrm{ms}$).
Batch verification, in which multiple entries are verified sequentially against the same tree, amortizes fixed overhead and achieves $27.17\,\mathrm{ms}$ per entry at batch size 1000 (\Cref{tab:verification}).
The near-constant per-entry cost confirms the expected $\mathcal{O}(\log n)$ complexity.

Proof generation averages $21.92\,\mathrm{ms}$ per proof with a mean proof size of $1006\,\mathrm{bytes}$ (14 sibling hashes of 64--79 bytes each) for a $10000$-entry tree (\Cref{tab:proofs}).

\begin{table}[t]
\centering
\caption{Batch verification performance on a $10000$-entry tree.  Total time scales linearly with batch size; per-entry cost remains stable at approximately $26\,\mathrm{ms}$.}
\label{tab:verification}
\small
\renewcommand{\arraystretch}{1.2}
\begin{tabular}{r
  r
  r}
\toprule
{\textbf{Batch size}} & {\textbf{Total time (ms)}} & {\textbf{Per entry (ms)}} \\
\midrule
   10 &    226.3 & 22.6 \\
   50 &   1243.6 & 24.9 \\
  100 &   2425.2 & 24.3 \\
  500 &  12219.1 & 24.4 \\
 1000 &  27168.4 & 27.2 \\
\bottomrule
\end{tabular}
\end{table}

\begin{figure}[t]
\centering
\begin{tikzpicture}
\begin{axis}[
    width=0.9\textwidth,
    height=6cm,
    xlabel={Batch size},
    ylabel={Per-entry verification time (ms)},
    xmode=log,
    log basis x=10,
    xmin=8, xmax=1500,
    ymin=20, ymax=30,
    grid=major,
    grid style={dashed, gray!25},
    thick,
    mark size=2.5pt,
    every axis label/.style={font=\small},
    every tick label/.style={font=\footnotesize},
]
\addplot+[
  mark=circle*,
  color=cPrimary,
  thick,
] table [
  col sep=comma,
  x=batch_size,
  y=avg_time_per_entry_ms
] {data/verification_batch.csv};
\end{axis}
\end{tikzpicture}
\caption{Batch verification per-entry time remains stable across batch sizes (sequential verification on a fixed tree).}
\label{fig:verification_batch}
\end{figure}

\begin{table}[t]
\centering
\caption{Merkle proof generation for selected indices in a $10000$-entry tree.  Proof length (number of sibling hashes) is consistent at 14 for interior indices and 8 for the boundary index, confirming $\mathcal{O}(\log n)$ scaling.}
\label{tab:proofs}
\small
\renewcommand{\arraystretch}{1.2}
\begin{tabular}{r
  r
  r
  r}
\toprule
{\textbf{Index}} & {\textbf{Gen.\ time (ms)}} & {\textbf{Proof size (B)}} & {\textbf{Proof len.}} \\
\midrule
    0 & 22.5 & 1106 & 14 \\
 2500 & 21.7 & 1101 & 14 \\
 5000 & 22.2 & 1101 & 14 \\
 7500 & 22.5 & 1099 & 14 \\
 9999 & 20.8 &  624 &  8 \\
\bottomrule
\end{tabular}
\end{table}

\begin{figure}[t]
\centering
\begin{tikzpicture}
\begin{axis}[
    width=0.9\textwidth,
    height=6cm,
    xlabel={Index},
    ylabel={Proof size (bytes)},
    xmin=-200, xmax=10200,
    ymin=0,
    grid=major,
    grid style={dashed, gray!25},
    thick,
    mark size=2.5pt,
    every axis label/.style={font=\small},
    every tick label/.style={font=\footnotesize},
]
\addplot+[
  mark=square*,
  color=cAccent,
  thick,
] table [
  col sep=comma,
  x=index,
  y=proof_size_bytes
] {data/proof_results.csv};
\end{axis}
\end{tikzpicture}
\caption{Proof size across selected indices in a $10000$-entry tree (logarithmic scaling with boundary effects).}
\label{fig:proof_size}
\end{figure}

\subsection{Hash-Algorithm Comparison}
\label{sec:hash_compare}

\Cref{tab:hash} compares SHA-256 and BLAKE2b on $10000$ entries.
BLAKE2b achieves 7.2\,\% higher ingestion throughput ($175073\,\mathrm{logs/s}$ vs.\ $163292\,\mathrm{logs/s}$) and similar verification latency ($21.17\,\mathrm{ms}$ vs.\ $21.30\,\mathrm{ms}$) in this implementation.
Both algorithms provide a 256-bit output and equivalent collision-resistance guarantees under standard assumptions; SHA-256 follows the NIST Secure Hash Standard \citep{dang2015nist}, and BLAKE2b is specified in RFC~7693 and supported by extensive cryptographic analysis \citep{aumasson2013blake2,rfc7693}.
For deployments where ingestion throughput is a primary bottleneck, BLAKE2b is an attractive option in our implementation.

\begin{table}[t]
\centering
\caption{Hash-algorithm comparison on $10000$ log entries.  Both algorithms provide 256-bit collision resistance; BLAKE2b offers higher throughput with comparable verification latency in this implementation.}
\label{tab:hash}
\small
\renewcommand{\arraystretch}{1.2}
\begin{tabular}{l
  r
  r
  l}
\toprule
\textbf{Algorithm} & {\textbf{Ingestion (logs/s)}} & {\textbf{Verif.\ (ms)}} & \textbf{Output} \\
\midrule
SHA-256  & 163292 & 21.30 & 256-bit \\
BLAKE2b  & 175073 & 21.17 & 256-bit \\
\midrule
\multicolumn{2}{l}{\footnotesize BLAKE2b advantage} & \multicolumn{2}{l}{\footnotesize +7.2\,\% throughput, similar latency} \\
\bottomrule
\end{tabular}
\end{table}

\begin{figure}[t]
\centering
\begin{tikzpicture}
\begin{axis}[
    width=0.9\textwidth,
    height=5.5cm,
    ybar,
    bar width=26pt,
    enlarge x limits=0.35,
    ylabel={Ingestion rate (logs/s)},
    symbolic x coords={sha256,blake2b},
    xtick=data,
    grid=major,
    grid style={dashed, gray!25},
    ymajorgrids=true,
    every axis label/.style={font=\small},
    every tick label/.style={font=\footnotesize},
]
\addplot+[fill=cPrimary!25, draw=cPrimary!70] table[x=algorithm,y=ingestion_rate_logs_per_sec,col sep=comma] {data/hash_algorithms.csv};
\end{axis}
\end{tikzpicture}
\caption{Ingestion-rate comparison of SHA-256 and BLAKE2b on $10000$ entries.}
\label{fig:hash_ingestion}
\end{figure}

\begin{figure}[t]
\centering
\begin{tikzpicture}
\begin{axis}[
    width=0.9\textwidth,
    height=5.5cm,
    ybar,
    bar width=26pt,
    enlarge x limits=0.35,
    ylabel={Single-entry verification time (ms)},
    symbolic x coords={sha256,blake2b},
    xtick=data,
    grid=major,
    grid style={dashed, gray!25},
    ymajorgrids=true,
    every axis label/.style={font=\small},
    every tick label/.style={font=\footnotesize},
]
\addplot+[fill=cAccent!25, draw=cAccent!70] table[x=algorithm,y=verification_time_ms,col sep=comma] {data/hash_algorithms.csv};
\end{axis}
\end{tikzpicture}
\caption{Single-entry verification time comparison of SHA-256 and BLAKE2b on $10000$ entries.}
\label{fig:hash_verification}
\end{figure}

\subsection{Tampering Detection}
\label{sec:tamper_results}

\Cref{tab:tampering} reports detection metrics across five corruption ratios ranging from 1\,\% to 50\,\% on $10000$-entry logs.
In all cases, precision, recall, and F1 equal 1.0: every tampered entry is detected (no false negatives) and no untampered entry is falsely flagged (no false positives).
This result is consistent with \Cref{thm:integrity}: any modification to a leaf hash propagates to the root with overwhelming probability, and the set-based detection logic (\Cref{sec:audit}) correctly identifies exactly the modified entries.

Detection wall-clock time ranges from $65\,\mathrm{ms}$ (5\,\% corruption) to $112\,\mathrm{ms}$ (10\,\% corruption), scaling with the number of verified entries and a fixed tree depth of $\lceil \log_2 n \rceil = 14$.

\begin{table}[t]
\centering
\caption{Tampering detection across corruption ratios on $10000$-entry logs.
Precision, recall, and F1 are 1.0 in all cases, confirming the deterministic integrity guarantee of \Cref{thm:integrity}.}
\label{tab:tampering}
\small
\renewcommand{\arraystretch}{1.2}
\begin{tabular}{r
  r
  r
  r
  r
  r
  r}
\toprule
{\textbf{Ratio (\%)}} &
{\textbf{Tampered}} &
{\textbf{Detected}} &
{\textbf{Prec.}} &
{\textbf{Rec.}} &
{\textbf{F1}} &
{\textbf{Time (ms)}} \\
\midrule
  1 &   100 &   100 & 1.0 & 1.0 & 1.0 &  87 \\
  5 &   500 &   500 & 1.0 & 1.0 & 1.0 &  65 \\
 10 &  1000 &  1000 & 1.0 & 1.0 & 1.0 & 112 \\
 20 &  2000 &  2000 & 1.0 & 1.0 & 1.0 & 107 \\
 50 &  5000 &  5000 & 1.0 & 1.0 & 1.0 &  90 \\
\bottomrule
\end{tabular}
\end{table}

\begin{figure}[t]
\centering
\begin{tikzpicture}
\begin{axis}[
    width=0.9\textwidth,
    height=6cm,
    xlabel={Tampering ratio},
    ylabel={Detection time (ms)},
    xmin=0, xmax=0.55,
    ymin=0,
    grid=major,
    grid style={dashed, gray!25},
    thick,
    mark size=2.5pt,
    every axis label/.style={font=\small},
    every tick label/.style={font=\footnotesize},
]
\addplot+[
  mark=diamond*,
  color=cSecondary,
  thick,
] table [
  col sep=comma,
  x=tamper_ratio,
  y expr=\thisrow{detection_time_seconds}*1000
] {data/tampering_detection.csv};
\end{axis}
\end{tikzpicture}
\caption{Tampering detection time under controlled corruption (verification performed only for modified indices).}
\label{fig:tampering_time}
\end{figure}

\subsection{Memory Footprint}
\label{sec:memory}

Peak memory consumption at $10000$~entries is $4.23\,\mathrm{MB}$ for adaptive chunking, $4.24\,\mathrm{MB}$ for fixed chunking, and $4.23\,\mathrm{MB}$ for a traditional (non-chunked) Merkle build.
The near-identical peak values reflect the fact that, in steady state, all three methods store the same tree structure; the benefit of adaptive chunking manifests during ingestion transients on memory-constrained devices, where it prevents allocation failures by throttling chunk size before peak usage is reached.
For all three modes, the sub-$5\,\mathrm{MB}$ footprint suggests feasibility for memory-constrained deployments; truly tiny microcontrollers would require a native implementation rather than Python.

\section{Discussion}
\label{sec:discussion}

\subsection{Comparative Advantage}

\Cref{tab:comparative} synthesizes the comparison between our pipeline and prior work across four deployment-critical dimensions: ingestion throughput, verification cost, infrastructure requirements, and edge suitability.

\begin{table}[t]
\centering
\caption{Deployment-critical comparison.
Ingestion and verification figures for our system are measured; figures for competing systems are derived from their respective publications or from architectural analysis (marked with~$\star$).
``Edge suitability'' reflects single-device deployability on sub-$512\,\mathrm{MB}$ RAM hardware.}
\label{tab:comparative}
\small
\renewcommand{\arraystretch}{1.35}
\setlength{\tabcolsep}{4pt}
\resizebox{\textwidth}{!}{%
\begin{tabular}{@{} l p{3.5cm} p{3.5cm} c c @{}}
\toprule
\textbf{System} &
\textbf{Ingestion throughput} &
\textbf{Verification cost} &
\textbf{Network} &
\textbf{Edge} \\
\midrule
Schneier \& Kelsey \citeyearpar{schneier1999secure}
  & MAC chain: moderate$^\star$
  & $\mathcal{O}(n)$ sequential
  & No
  & Moderate \\
Ma \& Tsudik \citeyearpar{ma2009new}
  & Constrained by signing$^\star$
  & $\mathcal{O}(n)$ sequential
  & No
  & Limited \\
Crosby \& Wallach \citeyearpar{crosby2009efficient}
  & Hash-rate limited$^\star$
  & $\mathcal{O}(\log n)$ proof
  & No
  & High \\
Putz et al. \citeyearpar{putz2019secure}
  & $\le$\,3{,}500 TPS$^\star$
  & Ledger lookup
  & Yes
  & Low \\
Ahmad et al. \citeyearpar{ahmad2019blockaudit}
  & PBFT-limited$^\star$
  & Ledger lookup
  & Yes
  & Low \\
Sinha et al. \citeyearpar{sinha2014continuous}
  & TPM seal rate$^\star$
  & TPM unseal
  & No
  & Requires TPM \\
\rowcolor{cHighlight}
\textbf{This work}
  & \textbf{$134242\,\mathrm{logs/s}$}
  & $\boldsymbol{\mathcal{O}(\log n)}$, \textbf{$22\,\mathrm{ms}$}
  & \textbf{No}
  & \textbf{High} \\
\bottomrule
\end{tabular}}
\end{table}

Three observations warrant elaboration.

\paragraph{Throughput advantage over blockchain-backed systems.}
Blockchain systems such as \citet{putz2019secure} report throughput in the hundreds to low thousands of transactions per second, bounded by consensus round-trip latency.
Our pipeline sustains $>\!130{,}000$~logs/s on the evaluation machine, primarily because it avoids distributed consensus and associated network overheads.
This advantage is structural in the sense that any consensus-based design incurs coordination overhead across multiple nodes.

\paragraph{Random-access advantage over sequential schemes.}
Key-evolution and aggregate-signature schemes require processing all entries up to index~$i$ to verify entry~$L_i$, yielding $\mathcal{O}(n)$ verification cost.
Our Merkle-proof verification requires only $\lceil \log_2 n \rceil$ hash evaluations (14 for $n = 10{,}000$), making random-access forensic queries practical even on large logs.

\paragraph{Minimal infrastructure requirements.}
The pipeline requires only a single compute node and a trusted root anchor, which can be as simple as an append-only file on a separate partition or a remote key-value store.
In contrast, blockchain systems require a consortium of peers, and TPM-based systems require specialized hardware.
This minimal footprint makes the pipeline suitable for deployment on existing IoT gateways.

\subsection{Methodological Contribution}

The correctness audit reported in \Cref{sec:audit} demonstrates that evaluation bugs can materially distort security claims even when the underlying cryptographic primitives are sound.
The double-counting defect (\textbf{D1}) produced detection accuracies exceeding 1.0, a value that should have been flagged as physically impossible during review, yet persisted in the pre-audit codebase.
This experience underscores the importance of treating implementation auditing as a first-class scientific activity, particularly in security research where evaluation fidelity directly affects trust in conclusions.

\subsection{Practical Deployment Considerations}

\paragraph{Root-anchor management.}
The security of the entire pipeline reduces to the integrity of the root anchor.
In practice, the root can be periodically committed to a remote append-only store (providing non-repudiation) or sealed in a TPM (providing hardware-rooted trust), at a cadence independent of the ingestion rate.
This decoupling allows the pipeline to operate at full throughput locally while anchoring roots at a pace compatible with network or hardware constraints.

\paragraph{Hash-algorithm selection.}
Our BLAKE2b benchmarks (\Cref{sec:hash_compare}) show a modest ingestion-throughput advantage over SHA-256 with comparable verification latency in this implementation.
For deployments on IoT processors without dedicated SHA-256 acceleration, BLAKE2b can be an attractive alternative depending on platform characteristics.

\section{Limitations and Future Work}
\label{sec:limits}

This study has several limitations that define directions for future research.

\begin{enumerate}[leftmargin=*, label=\textbf{L\arabic*.}]
  \item \textbf{Synthetic workloads.}
    All experiments use generated IoT logs.
    Validation on production traces from industrial IoT or SIEM deployments would strengthen the external validity of the throughput and memory findings.

  \item \textbf{Single-node evaluation.}
    The benchmarks run on a well-provisioned workstation.
    Deployment on actual ARM-based edge hardware (e.g., Raspberry Pi, ESP32 gateways) would quantify the real-world benefit of adaptive chunking under genuine memory pressure.

  \item \textbf{No adversarial stress testing.}
    While the formal analysis covers modification, deletion, injection, and replay attacks, the experiments test only controlled synthetic tampering.
    Future work should include red-team evaluations with compound attack strategies.

  \item \textbf{Streaming incremental maintenance.}
    The current implementation rebuilds the tree after each batch.
    A fully incremental (online) tree that supports $\mathcal{O}(\log n)$ amortized insertion per entry would reduce ingestion latency at small batch sizes and would eliminate the $\mathcal{O}(n)$ rebuild step per batch in the current design.

  \item \textbf{Multi-device root synchronization.}
    Extending the anchor mechanism to support multi-device deployments with cross-device root synchronization (without full blockchain consensus) is an open design problem.
\end{enumerate}

\section{Conclusion}
\label{sec:conclusion}

This paper presented and validated a lightweight, tamper-evident log-integrity verification pipeline for IoT edge environments.
The pipeline combines adaptive chunking with Merkle-tree commitments to sustain over $130000\,\mathrm{logs/s}$ ingestion throughput and $23\,\mathrm{ms}$ per-entry verification latency, with a peak memory footprint below $5\,\mathrm{MB}$ and $\mathcal{O}(\log n)$ proof size.
A rigorous implementation audit uncovered and corrected two latent defects that had inflated previously reported metrics, reinforcing the importance of code auditing as a scientific practice.
Our security analysis argues collision resistance and integrity under modification under standard assumptions on the hash function and a secure trusted anchor.
Empirical tampering experiments confirmed perfect precision, recall, and F1 across corruption ratios from 1\,\% to 50\,\%.
A systematic comparison with representative secure-logging systems from the key-evolution, signature-based, authenticated-structure, and blockchain families clarifies where this design is most practical: deployments that can anchor a trusted root without consensus, and that prioritize simplicity and verifiability over distributed immutability.

The resulting artifact, including all source code, benchmark runner, and reproducible evaluation scripts, is provided as a practical baseline for security engineering teams seeking tamper evidence with minimal deployment complexity.


\section*{Declaration of Competing Interest}
The authors declare that they have no known competing financial interests or personal relationships that could have appeared to influence the work reported in this paper.

\section*{CRediT authorship contribution statement}
\textbf{Muhammet An\i l Ya\u{g}\i z}: Conceptualization, Methodology, Software, Validation, Formal analysis, Investigation, Data curation, Writing -- original draft, Writing -- review \& editing, Visualization.
\textbf{Fahrettin Horasan}: Writing -- review \& editing.
\textbf{Ahmet Ha\c{s}im Yurttakal}: Writing -- review \& editing.

\section*{Data availability}
All source code, synthetic log generators, benchmark scripts, and raw results are publicly available in the GitHub repository at \url{https://github.com/anilyagiz/iot-tamper-evident-log-integrity}.
The reproducibility entry point is \texttt{python main.py benchmark}.

\section*{Declaration of generative AI and AI-assisted technologies in the writing process}
During the preparation of this work, the authors used AI-assisted tools solely to polish the text and correct minor typographical and punctuation issues (e.g., placing commas and periods). After using these tools, the authors reviewed and edited the content as needed and take full responsibility for the content of the published article.

\bibliographystyle{elsarticle-harv}
\bibliography{references}

@article{schneier1999secure,
  title={Secure audit logs to support computer forensics},
  author={Schneier, Bruce and Kelsey, John},
  journal={ACM Transactions on Information and System Security},
  volume={2},
  number={2},
  pages={159--176},
  year={1999},
  doi={10.1145/317087.317089}
}

@article{ma2009new,
  title={A new approach to secure logging},
  author={Ma, Di and Tsudik, Gene},
  journal={ACM Transactions on Storage},
  volume={5},
  number={1},
  pages={1--21},
  year={2009},
  doi={10.1145/1502777.1502779}
}

@inproceedings{crosby2009efficient,
  title={Efficient data structures for tamper-evident logging},
  author={Crosby, Scott A. and Wallach, Dan S.},
  booktitle={Proceedings of the 18th USENIX Security Symposium},
  pages={317--334},
  year={2009},
  publisher={USENIX Association},
  url={https://www.usenix.org/legacy/event/sec09/tech/full_papers/crosby.pdf}
}

@article{putz2019secure,
  title={A secure and auditable logging infrastructure based on a permissioned blockchain},
  author={Putz, Bernd and Dietz, Matthias and Pernul, G{\"u}nther},
  journal={Computers \& Security},
  volume={87},
  pages={101602},
  year={2019},
  doi={10.1016/j.cose.2019.101602}
}

@article{ahmad2019blockaudit,
  title={Secure and transparent audit logs with BlockAudit},
  author={Ahmad, Ashar and Saad, Muhammad and Mohaisen, Aziz},
  journal={Journal of Network and Computer Applications},
  volume={145},
  pages={102406},
  year={2019},
  doi={10.1016/j.jnca.2019.102406}
}

@misc{hartung2016secure,
  title={Secure Audit Logs with Verifiable Excerpts},
  author={Hartung, Gunnar},
  year={2016},
  howpublished={Cryptology ePrint Archive, Report 2016/283},
  url={https://eprint.iacr.org/2016/283}
}

@misc{rfc9162,
  title={{RFC 9162: Certificate Transparency Version 2.0}},
  author={Laurie, Ben and Messeri, Eran and Stradling, Rob},
  year={2021},
  howpublished={IETF RFC 9162},
  url={https://www.rfc-editor.org/info/rfc9162}
}

@misc{rfc7693,
  title={{RFC 7693: The BLAKE2 Cryptographic Hash and Message Authentication Code (MAC)}},
  author={Saarinen, Markku-Juhani O. and Aumasson, Jean-Philippe},
  year={2015},
  howpublished={IETF RFC 7693},
  url={https://www.rfc-editor.org/info/rfc7693}
}

@inproceedings{merkle1987digital,
  title={A digital signature based on a conventional encryption function},
  author={Merkle, Ralph C.},
  booktitle={Advances in Cryptology: CRYPTO '87},
  pages={369--378},
  year={1988},
  publisher={Springer},
  doi={10.1007/3-540-48184-2_32}
}

@article{hassan2019survey,
  title={Current research on {Internet of Things (IoT)} security: A survey},
  author={Hassan, Wan Haslina},
  journal={Computer Networks},
  volume={148},
  pages={283--294},
  year={2019},
  doi={10.1016/j.comnet.2018.11.025}
}

@article{li2017iot,
  title={The {Internet of Things}: A survey},
  author={Li, Shancang and Xu, Li Da and Zhao, Shanshan},
  journal={Information Systems Frontiers},
  volume={17},
  number={2},
  pages={243--259},
  year={2015},
  doi={10.1007/s10796-014-9492-7}
}

@misc{nist2020iot,
  title={{NISTIR 8259: Foundational Cybersecurity Activities for IoT Device Manufacturers}},
  author={{National Institute of Standards and Technology}},
  year={2020},
  howpublished={NIST Internal Report 8259},
  url={https://doi.org/10.6028/NIST.IR.8259}
}

@article{aumasson2013blake2,
  title={{BLAKE2}: simpler, smaller, fast as {MD5}},
  author={Aumasson, Jean-Philippe and Neves, Samuel and Wilcox-O'Hearn, Zooko and Winnerlein, Christian},
  journal={Applied Cryptography and Network Security},
  pages={119--135},
  year={2013},
  doi={10.1007/978-3-642-38980-1_8}
}

@article{dang2015nist,
  title={{Secure Hash Standard (SHS)}},
  author={{National Institute of Standards and Technology}},
  journal={Federal Information Processing Standards Publication 180-4},
  year={2015},
  doi={10.6028/NIST.FIPS.180-4}
}

@inproceedings{yavuz2012efficient,
  title={Efficient, compromise resilient and append-only cryptographic schemes for secure audit logging},
  author={Yavuz, Attila A. and Ning, Peng and Reiter, Michael K.},
  booktitle={Proceedings of the 16th International Conference on Financial Cryptography and Data Security},
  pages={148--163},
  year={2012},
  doi={10.1007/978-3-642-32946-3_12}
}

@article{kebande2017cloud,
  title={Novel digital forensic readiness technique in the cloud environment},
  author={Kebande, Victor R. and Venter, Hein S.},
  journal={Australian Journal of Forensic Sciences},
  year={2017},
  doi={10.1080/00450618.2016.1267797}
}

@misc{pulls2015balloon,
  title={Balloon: A Forward-Secure Append-Only Persistent Authenticated Data Structure},
  author={Pulls, Tobias and Peeters, Roel},
  year={2015},
  howpublished={Cryptology ePrint Archive, Report 2015/007},
  url={https://eprint.iacr.org/2015/007}
}

@article{sinha2014continuous,
  title={Continuous tamper-proof logging using {TPM 2.0}},
  author={Sinha, Arunesh and Jia, Limin and England, Paul and Lorch, Jacob R.},
  journal={Trust and Trustworthy Computing},
  pages={19--36},
  year={2014},
  doi={10.1007/978-3-319-08593-7_2}
}

@article{ali2018blockchain,
  title={Applications of Blockchains in the Internet of Things: A Comprehensive Survey},
  author={Ali, Muhammad Salek and Vecchio, Massimo and Pincheira, Miguel and Dolui, Koustabh and Antonelli, Fabio and Rehmani, Mubashir Husain},
  journal={IEEE Communications Surveys \& Tutorials},
  volume={21},
  number={2},
  pages={1676--1717},
  year={2019},
  doi={10.1109/COMST.2018.2886932}
}

@article{lin2017survey,
  title={A Survey on Internet of Things: Architecture, Enabling Technologies, Security and Privacy, and Applications},
  author={Lin, Jie and Yu, Wei and Zhang, Nan and Yang, Xinyu and Zhang, Hanlin and Zhao, Wei},
  journal={IEEE Internet of Things Journal},
  volume={4},
  number={5},
  pages={1125--1142},
  year={2017},
  doi={10.1109/JIOT.2017.2683200}
}

@article{alaba2017internet,
  title={Internet of Things security: A survey},
  author={Alaba, Fadele Ayotunde and Othman, Mazliza and Hashem, Ibrahim Abaker Targio and Alotaibi, Faiz},
  journal={Journal of Network and Computer Applications},
  volume={88},
  pages={10--28},
  year={2017},
  doi={10.1016/j.jnca.2017.04.002}
}

@article{stoyanova2020survey,
  title={A Survey on the Internet of Things (IoT) Forensics: Challenges, Approaches, and Open Issues},
  author={Stoyanova, Maria and Nikoloudakis, Yannis and Panagiotakis, Spyridon and Pallis, Evangelos and Markakis, Evangelos K.},
  journal={IEEE Communications Surveys \& Tutorials},
  volume={22},
  number={2},
  pages={1191--1221},
  year={2020},
  doi={10.1109/COMST.2019.2962586}
}

@misc{nist80092,
  title={{NIST SP 800-92}: Guide to Computer Security Log Management},
  author={Kent, Karen and Souppaya, Murugiah},
  year={2006},
  howpublished={NIST Special Publication 800-92},
  doi={10.6028/NIST.SP.800-92},
  url={https://csrc.nist.gov/pubs/sp/800/92/final}
}

@misc{nist80086,
  title={{NIST SP 800-86}: Guide to Integrating Forensic Techniques into Incident Response},
  author={Kent, Karen and Chevalier, Suzanne and Grance, Tim and Dang, Hung},
  year={2006},
  howpublished={NIST Special Publication 800-86},
  doi={10.6028/NIST.SP.800-86},
  url={https://csrc.nist.gov/pubs/sp/800/86/final}
}

@misc{nist80061r2,
  title={{NIST SP 800-61 Rev. 2}: Computer Security Incident Handling Guide: Recommendations of the National Institute of Standards and Technology},
  author={Cichonski, Paul and Millar, Tom and Grance, Tim and Scarfone, Karen},
  year={2012},
  howpublished={NIST Special Publication 800-61 Revision 2},
  doi={10.6028/NIST.SP.800-61r2},
  url={https://csrc.nist.gov/pubs/sp/800/61/r2/final}
}

@misc{iso27037,
  title={{ISO/IEC 27037:2012}: Guidelines for Identification, Collection, Acquisition and Preservation of Digital Evidence},
  author={{International Organization for Standardization}},
  year={2012},
  howpublished={ISO/IEC 27037:2012},
  url={https://www.iso.org/standard/44381.html}
}

@misc{fips1403,
  title={{FIPS 140-3}: Security Requirements for Cryptographic Modules},
  author={{National Institute of Standards and Technology}},
  year={2019},
  howpublished={Federal Information Processing Standards Publication 140-3},
  doi={10.6028/NIST.FIPS.140-3},
  url={https://csrc.nist.gov/pubs/fips/140-3/final}
}

@misc{rfc5424,
  title={{RFC 5424}: The Syslog Protocol},
  author={Gerhards, Rainer},
  year={2009},
  howpublished={IETF RFC 5424},
  url={https://datatracker.ietf.org/doc/html/rfc5424}
}

@article{haber1991timestamp,
  title={How to Time-Stamp a Digital Document},
  author={Haber, Stuart and Stornetta, W. Scott},
  journal={Journal of Cryptology},
  volume={3},
  number={2},
  pages={99--111},
  year={1991},
  doi={10.1007/BF00196791}
}

@incollection{bayer1993timestamp,
  title={Improving the Efficiency and Reliability of Digital Time-Stamping},
  author={Bayer, Dave and Haber, Stuart and Stornetta, W. Scott},
  booktitle={Sequences II: Methods in Communication, Security, and Computer Science},
  pages={329--334},
  year={1993},
  publisher={Springer},
  doi={10.1007/978-1-4613-9323-8_24}
}

\end{document}